\newtheorem{theorem}{Theorem}[section]
\newtheorem{lemma}[theorem]{Lemma}
\newtheorem{example}{Example}
\newtheorem{remark}{Remark}
\newtheorem{definition}{Definition}[section]
\numberwithin{equation}{section}
\begin{document}
	\title{Cryptography using generalized Fibonacci matrices with Affine-Hill cipher}
	\author{Kalika Prasad$^{1}$\footnote{E-mail: klkaprsd@gmail.com, \indent ORCiD ID: https://orcid.org/0000-0002-3653-5854/}, Hrishikesh Mahato$^{2}$\footnote{E-mail: hrishikesh.mahato@gmail.com}
		\\
		\normalsize{$^{1,2}$Department of Mathematics, Central University of Jharkhand, India, 835205}   }      
	\date{\today}
	\maketitle
	\noindent\rule{15cm}{.15pt}
	\begin{abstract}
		In this article, we have proposed a public key cryptography using Affine-Hill cipher with generalised Fibonacci matrix(called multinacci matrix). Also proposed a key establishment(exchange of key matrix $K=Q_{\lambda}^{k}$ of order $\lambda\times\lambda$ for encryption-decryption) scheme with the help of multinacci sequences under prime modulo. In this scheme, instead of exchanging key matrix, we need to exchange only pair of numbers $(\lambda, k)$, which reduces the time complexity as well as space complexity and comes with a large key-space.
	\end{abstract}
	\noindent\rule{15cm}{.1pt}
	\textit{\textbf{Keywords:} Affine Hill Cipher, Cryptography, Fibonacci Sequence \& Matrix, Multinacci Sequence \& Matrix,}
	\\\textit{\textbf{Mathematics Subject Classifications:} 11T71, 11B39, 14G50, 68P30, 68R01, 94A60}
		
%%%%%%%%%%%%%%%%%%%%%%%%%%%%%%%%%%%%%%%%%%%%%%%%%%%%%%%%%%%%%%%%%%%%%%%%%%%%%%%%%%%%%
\section{Introduction}
	In classical cryptography, 
	Hill cipher{\cite{stallings2017cryptography,stinson2005cryptography,mao2003modern}} is one of the polygraphic substitution ciphers which is based on residue system and linear algebra which was developed by the mathematician Lester Hill in 1929.
	
	M.K. Viswanath, et.al\cite{viswanath2015public} proposed the concept of public key cryptography using Hill's Cipher. They developed public key cryptography with Hill’s cipher system using rectangular matrix and for inverse of key matrix, they used Moore–Penrose Inverse (Pseudo Inverse) method. Later, P. Sundarayya \& G.V. Prasad\cite{sundarayya2019public} worked on same paper\cite{viswanath2015public}, and they proposed the method which increase the security of the above system by involving two or more digital signatures. To increase security of  Hill Cipher, Thilaka and Rajalakshmi\cite{thilaka2005extension} extended the concept of Hill Cipher by enciphering of m-length string to n-length string($ n \geq m $) using affine transformation and polynomial transformation. While In \cite{gupta2007cryptanalysis}, Indivar Gupta, et.al shows that these extension of Hill Cipher are prone to cryptanalytic attack and suggested that modifications in Hill Cipher do not make it significantly stronger. 
	
	In this paper, we develop a public key cryptosystem using Affine-Hill Cipher with generalised Fibonacci(multinacci) matrix with large power k, i.e  $Q_{\lambda}^{k}$ as key. Our method is quite robust and can be implemented easily.
	\\This paper is organized as follow. Following the introduction, the basic concept of Hill's Cipher \& Affine Cipher is outlined in Section 2. In Section 3, we discuss about Fibonacci \& Multinacci sequences and development of matrix corresponding to these sequences. In Section 4, discussion of proposed method for generating key matrix and inverse of key matrix with an numerical example. Finally in Section 5, we presented crypt-analytic strength and concluding remarks.
	
%%%%%%%%%%%%%%%%%%%%%%%%%%%%%%%%%%%%%%%%%%%%%%%%%%%%%%%%%%%%%%%%%%%%%%%%%%%%%%%%%%%%%
\section{Preliminaries}
	\subsection*{Hill's Cipher}
	Hill cipher is the first polygraphic cipher\footnote{Polygraphic cipher: In encryption, more than one letter can be encrypted at a time. }. The Hill's encryption scheme takes $ n $ successive plaintext letters and substitutes them for $ n $ ciphertext letters.
	Here, we use matrix representation P for plaintext, K for key matrix and C for ciphertext(encrypted matrix), where P, K, and Q are given as
%	\quad
	\begin{equation}
	P =
	\begin{bmatrix}	
		P_{1}& P_{2} & ... & P_{m}
	\end{bmatrix},
	K=
	\begin{bmatrix}	
		K_{11}& K_{12} & ... & K_{1n} \\ 		
		K_{21} & K_{22} & ... & K_{2n} \\ 		
		\vdots & \vdots & \ddots & \vdots \\ 		
		K_{n1} & K_{n2} & ... & K_{nn} 	
	\end{bmatrix}
	~~\text{and}~~
	C=
	\begin{bmatrix}	
		C_{1}& C_{2} & ... & C_{m}
	\end{bmatrix}
	\end{equation}	
	where $P_{i}'s$ and $C_{i}'s$ are block matrix of size $1\times n$.
	Thus, Hill cipher is described as:
	\\For encryption
	\begin{center}
		$Enc(P)$:\hspace{1cm} $C_{i} \equiv P_{i}K \pmod p$
	\end{center}
	\vspace{.4in}
	For decryption
	\begin{center}
		$Dec(C):$\hspace{.7cm} $P_{i} \equiv  K^{-1} C_{i}  \pmod p$.
	\end{center}
	where $p$ is prime and $gcd(det(K),p)=1$.
	
	\subsection*{Affine Cipher}
	One of the special case of substitution cipher is \textbf{Affine Cipher}\cite{stinson2005cryptography}, which is described as 
	\begin{eqnarray}
		Enc(x): y = (ax+b) \pmod{26}, \text{where} ~~a,b \in Z_{26}
	\end{eqnarray}
	These functions are called \textbf{affine functions} and in the affine cipher, we restrict the encryption function to affine function.
	\\For decryption, we have to solve the above function for $x$, i.e 
	\begin{eqnarray}
		y &\equiv& (ax+b) \pmod {26} \nonumber\\
		i.e~~ x &\equiv& a^{-1}(y-b) \pmod {26}, \text{where}~ a,b,x,y \in Z_{26}
	\end{eqnarray}
	where $gcd(a,26)=1$, which confirms the existence of $a^{-1}$.
			
%%%%%%%%%%%%%%%%%%%%%%%%%%%%%%%%%%%%%%%%%%%%%%%%%%%%%%%%%%%%%%%%%%%%%%%%%%%%%%%%%%%%%
	\subsection*{Affine-Hill Cipher}
	Affine-Hill Cipher is polygraphic block cipher, which extends the concept of the Hill Cipher as described above. Affine-Hill Cipher's encryption-decryption technique is given as:
	\begin{eqnarray}\label{Affine-Hill}
		Enc(P):\hspace{.7cm} C_{i} &\equiv& (P_{i}K + B) \pmod p \\
		Dec(C):\hspace{.7cm} P_{i} &\equiv& (C_{i} - B)K^{-1} \pmod p		
	\end{eqnarray}
	Where $P_{i}, C_{i}$ and $B$ are $1\times n$ matrices, $K$ is $n\times n$ key matrix, $p$ is prime greater than number of different characters used in plaintext and $Enc(P)$ \& $Dec(C)$ represents encryption techniques and decryption techniques respectively.
	%%%%%%%%%%%%%%%%%%%%%%%%%%%%%%%%%%%%%%%%%%%%%%%%%%%%%%%%%%%%%%%%%%%%%%%%%%%%%%%%%
	\subsection{Key exchange Algorithm(ElGamal Technique)}\label{ElGamal}
	In 1984, T. Elgamal\cite{stallings2017cryptography,stinson2005cryptography,paar2009understanding} proposed a public-key scheme based on discrete logarithms\cite{elgamal1985public} closely related to Deffie-Hellman technique. In Elgamal technique, the global elements are the chosen prime $p$ and selected primitive root of $p$.
	
	Let $p$ be a prime. Now chose a private key D such that $1 < D < \phi(p)$, then further select a primitive root\footnote{The exponent of $\alpha$ modulo $p$ is the least positive integer $k$ such that $\alpha^k\equiv 1 \pmod p$. If the exponent of $\alpha$ modulo $p$ is $\phi{(p)}$ then $\alpha$ is said to be a primitive root of $p$.} of $p$, say $\alpha$.
	Further assign $E_{1}=\alpha$ and $E_{2}= E_{1}^{D}\pmod p$, then make ($p, E_{1}, E_{2}$) as public key and keep $D$ as secret key. Suppose entity Alice and Bob want to exchange key.
	
	\subsubsection{Key exchanging}\label{EGkeyexchange}
	Using above public key($p, E_{1}, E_{2}$), Alice generates k as
	\\(i).~ Choose a random integer $e$ such that $1 < e < \phi(p)$, then
	\\(ii). Compute signature $k = E_{1}^{e} \pmod p$
	\\(iii). Compute secret key $\lambda  = E_{2}^{e} \pmod p$
	\\(iv). Thus, Alice who have access to Bob's public key can encrypt message with their secret key $\lambda $ and send $(k, C)$.
	\subsubsection*{Key Recover by Bob:}
	After receiving ($k, C)$ from Alice, Bob recovers secret key $\lambda $ using their secret key $ D $ as:
	\begin{eqnarray}\label{recoverkey}
		\lambda  &=& c^{D}\pmod p \nonumber\\
		  &\equiv& (E_{1}^{e})^{D} \pmod p \nonumber\\
		  &\equiv& (E_{1}^{D})^{e}\pmod p\nonumber\\
		  &\equiv& (E_{2})^{e}\pmod p
	\end{eqnarray}
	Thus, here Bob received the secret key $\lambda $ securely and using this secret key $\lambda$, Bob will decrypt the ciphertext $C$, and get original plaintext $P$.
	
%%%%%%%%%%%%%%%%%%%%%%%%%%%%%%%%%%%%%%%%%%%%%%%%%%%%%%%%%%%%%%%%%%%%%%%%%%%%%%
\section{Fibonacci Sequence and Fibonacci Q$_\lambda$ Matrix}
	The Fibonacci sequence\cite{johnson2008fibonacci,grimaldi2012fibonacci,wiki:xxx} is the sequence of integers $f_{n}$, defined by the recurrence relation
	\begin{eqnarray}\label{Fib2Seq}
		f_{k+2} = f_{k} + f_{k+1}, ~~k\geq 0,\text{ with initial values}~f_{0}=0,f_{1}=1
	\end{eqnarray}
	Fibonacci-Q$_\lambda$ matrix was introduced by Brenner\cite{grimaldi2012fibonacci}, later King were enumerated it's basic properties. In 1985, Honsberger\cite{honsberger1985mathematical} showed that the Fibonacci Q-matrix is a square matrix of order 2 of the form
	$ Q_{2} = 
	\begin{bmatrix}
		f_{2} & f_{1} \\ 
		f_{1} & f_{0} \\
	\end{bmatrix}
	= 
	\begin{bmatrix}
		1 & 1 \\ 
		1 & 0\\
	\end{bmatrix}$.
	\\
	And, it can be observed that $ Q_{2}^{n} = 
	\begin{bmatrix}	
		f_{n+1} & f_{n} \\ 
		f_{n} & f_{n-1} \\
	\end{bmatrix}$ with
	$ Q_{2} = 
	\begin{bmatrix}
	1 & 1 \\ 
	1 & 0 \\
	\end{bmatrix}$.
	where, Fibonacci $Q_2$-matrix\cite{gould1981history} has been developed using Fibonacci sequence and, it's inverse is given by
	\begin{eqnarray}\label{Q1-negMatrix}
		(Q_{2}^{n})^{-1} = (Q_{2})^{-n} = 
		\begin{bmatrix}		
		f_{-n+1} & f_{-n} \\ 		
		f_{-n} & f_{-n-1} \\
		\end{bmatrix}
	\end{eqnarray}
	Where $f_{-n}$ represent negative Fibonacci numbers. As we know that, Fibonacci sequence can be extended in negative which is given by $f_{-n} = (-1)^{n+1}f_{n}, ~~~n=1,2,3,..$
	\\And the determinant of $ Q_{2}^{n} $ matrix is given by 
	\begin{eqnarray}
		det(Q_{2}^{n}) = f_{n+1}f_{n-1} - f_{n}^{2} %=  (-1)^{n},~n=1,2,...
	\end{eqnarray}
	\textbf{Cassini's Identity:} Cassini's Identity\cite{koshy2019fibonacci} is mathematical identity for the Fibonacci numbers. It was discovered in 1680 by J.D. Cassini by observing the pattern of Fibonacci sequence. Cassini's Identity states that for the $ n^{th} $-Fibonacci number
	\begin{eqnarray}\label{cassini}
		f_{n-1}f_{n+1}-f_{n}^2 &=& (-1)^{n}, \hspace{1cm}\text{where} \hspace{.1cm} n\in \mathbb{Z}\nonumber\\
		\text{Thus,}~~~~~ det(Q_{2}^{n}) &=& (-1)^{n}
	\end{eqnarray}
	$\implies Q_{2}^{n}$ is non-singular matrix, for all $n$. 
%%%%%%%%%%%%%%%%%%%%%%%%%%%%%%%%%%%%%%%%%%%%%%%%%%%%%%%%%%%%%%%%%%%%%%%%%%%%%%%%%%%%
	\\In continuation to above Fibonacci matrix, Tribonacci matrix(for $\lambda = 3$), $Q_{3}^{n}$ is defined as
	\begin{eqnarray}
		 Q_{3}^{n} = 
		\begin{bmatrix}
			f_{n+2} & f_{n+1} + f_{n} & f_{n+1}\\ 
			f_{n+1} & f_{n} + f_{n-1} & f_{n} \\
			f_{n} & f_{n-1} + f_{n-2} & f_{n-1} \\
		\end{bmatrix}
		~~with~~~
		Q_{3} = 
		\begin{bmatrix}
			1 & 1 & 1\\	
			1 & 0 & 0\\
			0 & 1 & 0\\
		\end{bmatrix}
	\end{eqnarray} And inverse of $Q_{3}^{n}$ is given by 
	\begin{eqnarray}
		Q_{3}^{-n}  = 
		\begin{bmatrix}	
			f_{-n+2} & f_{-n+1} + f_{-n} & f_{-n}\\ 		
			f_{-n+1} & f_{-n} + f_{-n-1} & f_{-n-1} \\ 			
			f_{-n} & f_{-n-1} + f_{-n-2} & f_{-n-2}\\	
		\end{bmatrix}
		~~with~~~Q_{3}^{-1} =
		\begin{bmatrix}	
			0 & 1 & 0\\ 		
			0 & 0 & 1 \\ 			
			1 & -1 & -1\\	
		\end{bmatrix}
	\end{eqnarray}
	\\where tribonacci sequences is given by recurrence relation 
	$f_{k+3}=f_{k}+f_{k+1}+f_{k+2} $, $k \in \mathbb{Z}$ with initial values $f_{0}=f_{1}=0$ \& $f_{2}=1$.
%%%%%%%%%%%%%%%%%%%%%%%%%%%%%%%%%%%%%%%%%%%%%%%%%%%%%%%%%%%%%%%%%%%%%%%%%%%%%%%%%%%%%
	\subsection{Generalized Fibonacci Sequence and Fibonacci matrix}
	The $\lambda^{th}$ order generalized Fibonacci sequence is given by the following $\lambda^{th}$ order recurrence relation:
	\begin{eqnarray}\label{genrec}
		f_{k+\lambda}= f_{k}+ f_{k+1}+...+ f_{k+\lambda-1} \hspace{2cm}k\geq0 , \text{$ \lambda \in \mathbb{Z^+} $}
	\end{eqnarray}
	with initial values $f_{0}=f_{1}=f_{2}=....= f_{\lambda-2}=0, f_{\lambda-1}=1$.
	\\Consider the corresponding multinacci $Q_{\lambda}$-matrix of order $\lambda$, given by 
	\begin{center}
	$Q_{\lambda}=
	\begin{bmatrix}	
		1 & 1 & ... & 1 & 1\\	
		1 & 0 & ... & 0 & 0\\
		0 & 1 & ... & 0 & 0\\
		\vdots & \vdots & \ddots & \vdots \\
		0 & 0 & ... & 1 & 0\\
	\end{bmatrix}_{\lambda\times\lambda}
	=
	\begin{bmatrix}
		f_{\lambda} & f_{\lambda-1} +...+f_{0} & f_{\lambda-1}+...+f_{1} & ... & f_{\lambda-1} \\
		
		f_{\lambda-1} & f_{\lambda-2} +...+f_{-1} & f_{\lambda-2}+...+f_{0} & ... & f_{\lambda-2} \\
		
		\vdots & \vdots & \vdots & \ddots & \vdots \\
		
		f_{2} & f_{1} +...+f_{-\lambda} & f_{1}+...+f_{1-\lambda} & ... & f_{1} \\
		f_{1} & f_{0}+ ...+f_{-\lambda-1} & f_{0}+...+f_{-\lambda} & ... & f_{0} \\
	\end{bmatrix}$ 
	\end{center}
	And using mathematical induction, it can be observed that 	
	\begin{eqnarray}\label{GenFeb}
	\begin{bmatrix}	
		1 & 1 & 1 & ... & 1 & 1\\	
		1 & 0 & 0 & ... & 0 & 0\\
		0 & 1 & 0 & ... & 0 & 0\\
		\vdots & \vdots & \vdots & \ddots & \vdots \\
		0 & 0 & 0 & ... & 1 & 0\\
	\end{bmatrix}^{k}%_{\lambda \times \lambda}
	&=&
	\begin{bmatrix}
		f_{k+\lambda-1} & f_{k+\lambda-2}+...+f_{k-1} & f_{k+\lambda-2}+...+f_{k} & ... & f_{k+\lambda-2} \\
		
		f_{k+\lambda-2} & f_{k+\lambda-3} +...+f_{k-2} & f_{k+\lambda-3}+...+f_{k-1} & ... & f_{k+\lambda-3} \\
		
		\vdots & \vdots & \vdots & \ddots & \vdots \\
		
		f_{k+1} & f_{k}+ ...+f_{k-\lambda-1} & f_{k}+...+f_{k-\lambda} & ... & f_{k} \\
		f_{k} & f_{k-1}+ ...+f_{k-\lambda-2} & f_{k-1}+...+f_{k-\lambda-1} & ... & f_{k-1}\nonumber\\
	\end{bmatrix}\\
	&=& Q_{\lambda}^{k}
	\end{eqnarray}
%%%%%%%%%%%%%%%%%%%%%%%%%%%%%%%%%%%%%%%%%%%%%%%%%%%%%%%%%%%%%%%%%%%%%%%%%%%%%%%%%
	
	\begin{remark}
		Let $Q_{\lambda}^{k}$ is generalized Fibonacci matrix of order $\lambda\times\lambda$, then
		\begin{eqnarray}
		det(Q_{\lambda}) &=& (-1)^{\lambda-1}\hspace{1cm} (here~~Q_{\lambda}= Q_{\lambda}^{1})\nonumber\\
		\text{Thus,}\hspace{1cm} det(Q_{\lambda}^{k}) &=& [(-1)^{\lambda-1}]^k\nonumber\\
					&=& (-1)^{(\lambda-1)k}
		\end{eqnarray}
	\end{remark}
	We also observed that
	$Q_{\lambda}^k.Q_{\lambda}^1 = Q_{\lambda}^{k+1}$ As\\
	$Q_{\lambda}^k.Q_{\lambda}^1 =$\\
	$\begin{bmatrix}
		f_{k+\lambda-1} & f_{k+\lambda-2}+f_{k+\lambda-3}+...+f_{k-1} & f_{k+\lambda-2}+...+f_{k} & ... & f_{k+\lambda-2} \\
		
		f_{k+\lambda-2} & f_{k+\lambda-3}+f_{k+\lambda-4}+...+f_{k-2} & f_{k+\lambda-3}+...+f_{k-1} & ... & f_{k+\lambda-3} \\
		
		\vdots & \vdots & \vdots & \ddots & \vdots \\
		
		f_{k+1} & f_{k}+f_{k-1}+...+f_{k-\lambda-1} & f_{k}+...+f_{k-\lambda} & ... & f_{k} \\
		f_{k} & f_{k-1}+f_{k-2}+...+f_{k-\lambda-2} & f_{k-1}+...+f_{k-\lambda-1} & ... & f_{k-1} \\
	\end{bmatrix} . 
	\begin{bmatrix}	
	1 & 1 & 1 & ... & 1 & 1\\	
	1 & 0 & 0 & ... & 0 & 0\\
	0 & 1 & 0 & ... & 0 & 0\\
	\vdots & \vdots & \vdots & \ddots & \vdots \\
	0 & 0 & 0 & ... & 1 & 0\\
	\end{bmatrix}$
	\vspace{.4cm}
	\\=~~
	$\begin{bmatrix}
	f_{k+\lambda-1}+f_{k+\lambda-2}+...+f_{k-1} & f_{k+\lambda-1}+f_{k+\lambda-2}+...+f_{k} & f_{k+\lambda-1}+...+f_{k+1} & ... & f_{k+\lambda-1} \\
	
	f_{k+\lambda-2}+f_{k+\lambda-3}+...+f_{k-2} & f_{k+\lambda-2}+f_{k+\lambda-3}+...+f_{k-1} & f_{k+\lambda-2}+...+f_{k} & ... & f_{k+\lambda-2} \\
	
	\vdots & \vdots & \vdots & \ddots & \vdots \\
	
	f_{k+1}+f_{k}+...+f_{k-\lambda-1} & f_{k+1}+f_{k}+...+f_{k-\lambda} & f_{k+1}+...+f_{k-\lambda+1} & ... & f_{k+1} \\
	f_{k}+f_{k-1}+...+f_{k-\lambda-2} & f_{k}+f_{k-1}+...+f_{k-\lambda-1} & f_{k}+...+f_{k-\lambda} & ... & f_{k} \\
	\end{bmatrix}$
	\vspace{.1cm}
	\\entries of first column can be reduces using recurrence relations(\ref{genrec}), which gives 
	\\=~~
	$\begin{bmatrix}
	f_{k+\lambda} & f_{k+\lambda-1}+f_{k+\lambda-2}+...+f_{k} & f_{k+\lambda-1}+...+f_{k+1} & ... & f_{k+\lambda-1}\\	
	
	f_{k+\lambda-1} & f_{k+\lambda-2}+f_{k+\lambda-3}+...+f_{k-1} & f_{k+\lambda-2}+...+f_{k} & ... & f_{k+\lambda-2} \\

	\vdots & \vdots & \vdots & \ddots & \vdots \\
	
	f_{k+2} & f_{k+1}+f_{k}+...+f_{k-\lambda} & f_{k+1}+...+f_{k-\lambda+1} & ... & f_{k+1} \\
	f_{k+1} &f_{k}+f_{k-1}+...+f_{k-\lambda-1} & f_{k}+...+f_{k-\lambda} &...& f_{k}\\
	\end{bmatrix}  = Q_{\lambda}^{k+1}$
	\\\\Further, the recurrence relation (\ref{genrec}) can be re-write as 
	\begin{eqnarray}\label{negrecrel}
		f_{k}= f_{k+\lambda}-(f_{k+1}+...+ f_{k+\lambda-1}) \hspace{2cm} for~ k\leq -1
	\end{eqnarray}
	Or, equivalently
	\begin{eqnarray}\label{neg-k}
	f_{-k}= f_{-k+\lambda}-(f_{-k+1}+...+ f_{-k+\lambda-1}) \hspace{2cm} for~ k\geq 1
	\end{eqnarray}
	with the same initial values given in equation(\ref{genrec}). Equation(\ref{negrecrel}) generalize the $\lambda^{th}$ order negative multinacci sequences.
	
	\begin{lemma}
		Let p is prime and K is generalized Fibonacci matrix, then 
		\begin{eqnarray}
		det(K)\pmod p = det(K\pmod p)
		\end{eqnarray}
	\end{lemma}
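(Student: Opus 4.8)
The plan is to exploit the fact that reduction modulo a prime is a ring homomorphism and that the determinant is a universal polynomial expression in the matrix entries with integer coefficients. First I would observe that since $K = Q_{\lambda}^{k}$ is a generalized Fibonacci matrix, all of its entries are integers (they are sums of multinacci numbers, including the negative-index ones defined in equation(\ref{negrecrel})), so the entrywise reduction $K \pmod p$ is a well-defined matrix over $\mathbb{Z}_p$. Let $\phi \colon \mathbb{Z} \to \mathbb{Z}_p$ denote the canonical reduction map $\phi(a) = a \pmod p$. The essential point is that $\phi$ is a ring homomorphism, so it satisfies $\phi(a+b) = \phi(a) + \phi(b)$ and $\phi(ab) = \phi(a)\,\phi(b)$, and it sends $1 \mapsto 1$, $-1 \mapsto -1$.

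Next I would invoke the Leibniz (permutation) expansion of the determinant. Writing $K = (K_{ij})$ of order $\lambda$, we have
\begin{equation}
\det(K) = \sum_{\sigma \in S_{\lambda}} \mathrm{sgn}(\sigma) \prod_{i=1}^{\lambda} K_{i,\sigma(i)},
\end{equation}
which exhibits $\det(K)$ as a polynomial in the entries $K_{ij}$ with coefficients $\mathrm{sgn}(\sigma) \in \{+1,-1\}$. Because $\phi$ commutes with finite sums and products, applying $\phi$ and pushing it inside the summation and the product gives
\begin{equation}
\phi\big(\det(K)\big) = \sum_{\sigma \in S_{\lambda}} \mathrm{sgn}(\sigma) \prod_{i=1}^{\lambda} \phi\big(K_{i,\sigma(i)}\big).
\end{equation}
The right-hand side is precisely the Leibniz expansion of the determinant of the entrywise-reduced matrix $\phi(K) = K \pmod p$, so $\phi(\det(K)) = \det(K \pmod p)$, which is exactly the stated identity.

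The argument is essentially formal, and I expect no genuine obstacle. The only point that warrants a word of care is the handling of the permutation signs: since each $\mathrm{sgn}(\sigma)$ equals $\pm 1$ and $\phi(-1) = -1$ in $\mathbb{Z}_p$, the signs pass through the homomorphism unchanged and cause no difficulty. I would emphasize in the writeup that the single structural fact driving the proof is that reduction mod $p$ is a ring homomorphism; the compatibility of the determinant with reduction is then automatic from the polynomial nature of $\det$ over $\mathbb{Z}$. As a closing remark, the conclusion holds verbatim for \emph{any} integer matrix, not merely the generalized Fibonacci matrices; the multinacci structure plays no role in the statement itself and serves only as the setting in which the lemma is subsequently applied.
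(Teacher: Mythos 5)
Your proof is correct, and it is worth noting that the paper itself states this lemma with no proof at all, so your argument supplies a justification the authors omitted. The route you take --- reduction mod $p$ as a ring homomorphism $\phi\colon\mathbb{Z}\to\mathbb{Z}_p$ combined with the Leibniz expansion of $\det$ as an integer polynomial in the entries --- is the standard and essentially the only natural one, and every step goes through: the entries of $Q_{\lambda}^{k}$ (including negative powers, via equation~(\ref{neg-k})) are indeed integers, and the signs $\mathrm{sgn}(\sigma)=\pm 1$ pass through $\phi$ without incident. Your closing observation is also sharper than the statement itself: primality of $p$ is never used, so the identity holds for reduction modulo any $n\in\mathbb{N}$ and for any integer matrix, which in fact dovetails with the paper's own later remark (in the conclusion) that the multinacci key matrices work over the ring $Z_n$ rather than requiring a field --- the primality hypothesis in the lemma is cosmetic, needed only elsewhere in the scheme for invertibility of arbitrary residues, not for this determinant identity.
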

%%%%%%%%%%%%%%%%%%%%%%%%%%%%%%%%%%%%%%%%%%%%%%%%%%%%%%%%%%%%%%%%%%%%%%%%%%%%%%%%%%%	
	\begin{theorem}[Existence of Inverse of multinacci matrix] 
		Let $\lambda\in\mathbb{Z^{+}}$, then for every integer $k\in \mathbb{Z}$, the inverse of multinacci matrices $Q_{\lambda}^k$ is given by $Q_{\lambda}^{-k}$ as defined in equation(\ref{GenFeb}).
	\end{theorem}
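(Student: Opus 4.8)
The plan is to treat the explicit right-hand side of (\ref{GenFeb}) as a single matrix $M_k$ defined for \emph{every} integer $k$, where the entries are read off the bi-infinite multinacci sequence obtained by extending $f_n$ to negative indices through (\ref{negrecrel}), and then to prove $M_k=Q_\lambda^k$ for all $k\in\mathbb{Z}$ by a two-sided induction. Once this is established, replacing $k$ by $-k$ identifies the explicitly written matrix $Q_\lambda^{-k}$ with the genuine matrix power $(Q_\lambda)^{-k}$, and the inverse claim is then immediate.

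First I would record invertibility. By the Remark, $\det(Q_\lambda)=(-1)^{\lambda-1}=\pm1$ is a unit, so $Q_\lambda\in GL_\lambda(\mathbb{Z})$ and every integer power $Q_\lambda^{k}$ is well defined and obeys the group law $Q_\lambda^{k}Q_\lambda^{m}=Q_\lambda^{k+m}$; in particular $Q_\lambda^{k}Q_\lambda^{-k}=Q_\lambda^{0}=I$. Hence the matrix power $Q_\lambda^{-k}$ is certainly the inverse of $Q_\lambda^{k}$, and the only genuine content of the theorem is that the entrywise formula $M_{-k}$ coincides with this matrix inverse.

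For the base case I would verify $M_1=Q_\lambda$, which is exactly the identity displayed in the definition of $Q_\lambda$ (using $f_\lambda=1$, $f_{\lambda-1}=1$ and $f_{\lambda-2}=\cdots=f_0=0$ from the initial values). The inductive engine is the product computation already carried out in the excerpt, namely $M_kQ_\lambda=M_{k+1}$: the crucial observation is that this computation uses \emph{only} the recurrence (\ref{genrec}) to collapse the first column, and since (\ref{negrecrel}) is nothing but (\ref{genrec}) solved for the lowest-index term, the extended sequence satisfies (\ref{genrec}) at \emph{every} index. Thus the identity $M_kQ_\lambda=M_{k+1}$ is index-agnostic and holds for all $k\in\mathbb{Z}$. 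Upward induction from the base case gives $M_k=Q_\lambda^{k}$ for $k\ge1$; then, multiplying $M_kQ_\lambda=M_{k+1}$ on the right by $Q_\lambda^{-1}$ (available by invertibility) yields $M_{k-1}=M_kQ_\lambda^{-1}$, and downward induction starting from $M_1=Q_\lambda$ gives $M_0=I$ and $M_{-k}=Q_\lambda^{-k}$ for all $k\ge0$.

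The main obstacle — really the only subtlety — is justifying that the column-reduction step of the matrix product, originally displayed for positive exponents, remains valid once negative indices appear. This is settled precisely by the negative-index extension (\ref{negrecrel}): because the bi-infinite sequence obeys the same linear recurrence at every index, the algebraic identity $M_kQ_\lambda=M_{k+1}$ carries over verbatim and no separate computation is required for $k\le0$. Everything else is routine bookkeeping, so I would confine the write-up to the base case, the index-agnostic inductive step, and the two directions of induction, concluding with $Q_\lambda^{-k}=M_{-k}=(Q_\lambda^{k})^{-1}$.
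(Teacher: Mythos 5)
Your proposal is correct, and it takes a genuinely different route from the paper's own proof, even though both run on the same computational engine (collapsing column sums via the recurrence). The paper inducts directly on the product statement $Q_{\lambda}^{m}Q_{\lambda}^{-m}=I_{\lambda}$: it verifies $Q_{\lambda}^{1}Q_{\lambda}^{-1}=I_{\lambda}$ by explicit multiplication (equation (\ref{truefor=1})), carries out a \emph{second}, separate matrix computation $Q_{\lambda}^{-1}\cdot Q_{\lambda}^{-k}=Q_{\lambda}^{-(k+1)}$ inside the proof (the negative-index twin of the pre-theorem computation $Q_{\lambda}^{k}Q_{\lambda}=Q_{\lambda}^{k+1}$), and then chains these in the step $Q_{\lambda}^{m+1}Q_{\lambda}^{-(m+1)}=Q_{\lambda}^{m}Q_{\lambda}Q_{\lambda}^{-1}Q_{\lambda}^{-m}=I_{\lambda}$. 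You instead prove the stronger identification $M_{k}=Q_{\lambda}^{k}$ for all $k\in\mathbb{Z}$, where $M_{k}$ is the entrywise matrix of (\ref{GenFeb}) read off the bi-infinite sequence, via a two-sided induction from a \emph{single} index-agnostic shift identity $M_{k}Q_{\lambda}=M_{k+1}$, and then the inverse claim drops out formally from the group law for powers of a unimodular matrix. Your route buys two things: the observation that the column-collapse uses only (\ref{genrec}), which (\ref{negrecrel}) extends to every index, eliminates the paper's duplicated negative-index computation; and you cleanly separate the combinatorial content (explicit formula equals matrix power) from the formal content (powers of an invertible matrix satisfy $Q_{\lambda}^{k}Q_{\lambda}^{-k}=Q_{\lambda}^{0}=I$), which also makes explicit why writing $Q_{\lambda}^{-k}$ for the entrywise matrix is consistent notation --- a point the paper never actually addresses. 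The paper's route, by contrast, never needs the determinant: invertibility falls out of its base-case multiplication. One small caution: you invoke the Remark for $\det Q_{\lambda}=(-1)^{\lambda-1}$, which the paper states without proof; if you prefer not to lean on it, your downward induction works equally well by taking $Q_{\lambda}^{-1}:=M_{-1}$ and citing the directly verified identity $Q_{\lambda}M_{-1}=I_{\lambda}$ from (\ref{truefor=1}), so no determinant computation is needed at all.
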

%	\begin{theorem}[Existence of Inverse of multinacci matrix $Q_{\lambda}^k$] 
%		Let $K = Q_{\lambda}^{k}$, then for every integer $k \geq 0$, there exists a $K^{*}$ such that $K.K^{*}=I_{\lambda}$, where K and I are square matrix of order $\lambda$, $\lambda\in\mathbb{Z^{+}}$.
%	\end{theorem}
	\begin{proof}
	We shall prove existence by mathematical induction on k. Since, by the definition of $Q_{\lambda}^k(\ref{GenFeb})$, we have
	\begin{center}
		$Q_{\lambda}^{-1} =
		\begin{bmatrix}	
		0 & 1 & 0 & ... & 0 & 0\\	
		0 & 0 & 1 & ... & 0 & 0\\
		0 & 0 & 0 & ... & 0 & 0\\
		\vdots & \vdots & \vdots & \ddots & \vdots & \vdots \\
		0 & 0 & 0 & ... & 0 & 1\\
		1 & -1 & -1 & ... & -1 & -1\\
		\end{bmatrix}_{\lambda \times\lambda}$
	\end{center}
	and
	\begin{eqnarray}
	Q_{\lambda}^{-k} =
	\begin{bmatrix}
		f_{-k+\lambda-1} & f_{-k+\lambda-2}+f_{-k+\lambda-3}+...+f_{-k-1} & ... & f_{-k+\lambda-2} \\
			
		f_{-k+\lambda-2} & f_{-k+\lambda-3}+f_{-k+\lambda-4}+...+f_{-k-2} &... & f_{-k+\lambda-3} \\
			
		\vdots & \vdots & \ddots & \vdots \\
			
		f_{-k+1} & f_{-k}+f_{-k-1}+...+f_{-k-\lambda-1} &... & f_{-k} \\
		f_{-k} & f_{-k-1}+f_{-k-2}+...+f_{-k-\lambda-2} &... & f_{-k-1} \\
	\end{bmatrix}
	\end{eqnarray}
	Now, for $k=1$
	\begin{eqnarray}\label{truefor=1}
	Q_{\lambda}^{1}.Q_{\lambda}^{-1}
	&=& 
	\begin{bmatrix}	
	1 & 1 & 1 & ... & 1\\	
	1 & 0 & 0 & ... & 0 \\
	0 & 1 & 0 & ... & 0 \\
	\vdots & \vdots & \vdots & \ddots & \vdots \\
	0 & 0 & 0 & ... & 1 \\
	\end{bmatrix}~.~ 
	\begin{bmatrix}	
	0 & 1 & 0 & ... & 0\\	
	0 & 0 & 1 & ... & 0 \\
	0 & 0 & 0 & ... & 0 \\
	\vdots & \vdots & \vdots & \ddots & \vdots \\
	1 & -1 & -1 & ... & -1 \\
	\end{bmatrix}
	=
	\begin{bmatrix}	
	1 & 0 & 0 & ... & 0\\	
	0 & 1 & 0 & ... & 0 \\
	0 & 0 & 1 & ... & 0 \\
	\vdots & \vdots & \vdots & \ddots & \vdots \\
	0 & 0 & 0 & ... & 1 \\
	\end{bmatrix}_{\lambda \times \lambda}\nonumber\\
	&=& I_{\lambda}~~\hspace{4cm}
	\end{eqnarray}
	Since, we have $Q_{\lambda}^{-1}.Q_{\lambda}^{-k} =$
	\begin{center}
	$\begin{bmatrix}	
		0 & 1 & 0 & ... & 0\\	
		0 & 0 & 1 & ... & 0\\
		\vdots & \vdots & \vdots & \ddots & \vdots \\
		0 & 0 & 0 & ... & 1\\
		1 & -1 & -1 & ... &-1\\
	\end{bmatrix} . 		
	\begin{bmatrix}
		f_{-k+\lambda-1} & f_{-k+\lambda-2}+f_{-k+\lambda-3}+...+f_{-k-1} & ... & f_{-k+\lambda-2} \\
			
		f_{-k+\lambda-2} & f_{-k+\lambda-3}+f_{-k+\lambda-4}+...+f_{-k-2} &... & f_{-k+\lambda-3} \\
		
		\vdots & \vdots & \ddots & \vdots \\
		
		f_{-k+1} & f_{-k}+f_{-k-1}+...+f_{-k-\lambda-1} &... & f_{-k} \\
		f_{-k} & f_{-k-1}+f_{-k-2}+...+f_{-k-\lambda-2} &... & f_{-k-1} \\
	\end{bmatrix}$\vspace{.4cm}
	=
	$\begin{bmatrix}
		f_{-k+\lambda-2} & f_{-k+\lambda-3}+f_{-k+\lambda-4}+...+f_{-k-2} &... & f_{-k+\lambda-3} \\
		
		f_{-k+\lambda-3} & f_{-k+\lambda-4}+f_{-k+\lambda-5}+...+f_{-k-3} & ... & f_{-k+\lambda-4}\\	
		
		f_{-k+\lambda-4} & f_{-k+\lambda-5}+f_{-k+\lambda-6}+...+f_{-k-4} & ... & f_{-k+\lambda-5} \\		
		\vdots & \vdots & \ddots & \vdots \\	
		
		f_{-k} & f_{-k-1}+f_{-k-2}+...+f_{-k-\lambda-2} &... & f_{-k-1} \\
		f_{-k-1} & f_{-k-2}+f_{-k-3}+...+f_{-k-\lambda-3} &... & f_{-k-2} \\
	\end{bmatrix} = Q_{\lambda}^{-(k+1)}$
	\end{center}
	Now, assume that result holds for $k = m$, i.e
	\begin{eqnarray}\label{truefor=k}
		Q_{\lambda}^{m}.Q_{\lambda}^{-m} = I_{\lambda}
	\end{eqnarray}
	Thus, for $k = m+1$, we have,
	\begin{eqnarray}
		Q_{\lambda}^{(m+1)}.Q_{\lambda}^{-(m+1)} 
		&=& Q_{\lambda}^{m}.Q_{\lambda}^{1}Q_{\lambda}^{-1}.Q_{\lambda}^{-m} \nonumber\\
		&=& Q_{\lambda}^{m}.I_{\lambda}.Q_{\lambda}^{-m} ~~~~~~~~~~\text{Using equation(\ref{truefor=1})} \nonumber\\
		&=& Q_{\lambda}^{m}.Q_{\lambda}^{-m} \hspace{1.5cm}\text{Using equation(\ref{truefor=k})} \nonumber\\
		&=& I_{\lambda}
	\end{eqnarray} Hence, proved
	\end{proof}
%%%%%%%%%%%%%%%%%%%%%%%%%%%%%%%%%%%%%%%%%%%%%%%%%%%%%%%%%%%%%%%%%%%%%%%%%%%%%%%%%%%	
\section{Proposed Encryption Schemes}
	\subsection*{Calculation of $ n^{th} $ Fibonacci term}
	Since, the $ n^{th} $ Fibonacci number is defined as the sum of the $(n-1)^{th} $ and $ (n-2)^{th} $ term. So to calculate the $ n^{th} $ Fibonacci number, for instance, we need to compute all the $ n-1 $ values before it first - quite a task, even with a calculator!
	
	There is explicit formula for recurrence relation of order two by using which we can be directly calculate the $n^{th}$ term of sequence, which reduces the time and space complexity $\mathcal{O}(n^2)$ to $\mathcal{O}(1)$, where $\mathcal{O}$ represents $Big~O -notation$. It is given using solutions of following characteristic equation, corresponding to recurrence relation(\ref{Fib2Seq})
	\begin{eqnarray}\label{ch-eqn}
		\alpha^{2}-\alpha - 1 = 0
	\end{eqnarray}
	and explicit formula for $n^{th}$ term of Fibonacci sequence is
	\begin{eqnarray}
		F_{n} &=& \frac{1}{\sqrt{5}}\left[\left(\frac{1+\sqrt{5}}{2}\right)^{n} - \left(\frac{1-\sqrt{5}}{2}\right)^{n}\right] \nonumber
	\end{eqnarray}

	%%%%%%%%%%%%%%%%%%%%%%%%%%%%%%%%%%%%%%%%%%%%%%%%%%%%%%%%%%%%%%%%%%%%%%%%%%%%%%%%%
	\subsubsection*{Calculation of $ n^{th}$ term for tribonacci numbers}	
	Keeping in mind the complexity for construction of key matrix for modified hill's cipher, we may establish a formula for direct calculation of $n^{th}$ term of tribonacci sequence using the generalization of explicit formula for higher order. 
	Similar to Fibonacci sequence, the $ n^{th}$ term of tribonacci sequence is given using solutions of following characteristic equation of degree three
	\begin{eqnarray}\label{4.7}
		\alpha^{3}-\alpha^{2}-\alpha - 1 = 0
	\end{eqnarray}
	On solving equation(\ref{4.7}), we get roots $\alpha_{1}, \alpha_{2}$ \text{and} $\alpha_{3}$, which are \vspace{.2cm} \\
	$\alpha_{1} = \dfrac{(19+3\sqrt{33})^{\frac{2}{3}}+(19+3\sqrt{33})^{\frac{1}{3}}+4}{(19+3\sqrt{33})^{\frac{1}{3}}}$,\\
	$\alpha_{2} = \dfrac{(19+3\sqrt{33})^{\frac{2}{3}}(-1+i\sqrt{3})+2(19+3\sqrt{33})^{\frac{1}{3}}-4(1+i\sqrt{3})}{6(19+3\sqrt{33})^{\frac{1}{3}}}$, and\\
	$\alpha_{3} = \dfrac{(19+3\sqrt{33})^{\frac{2}{3}}(1-i\sqrt{3})-2(19+3\sqrt{33})^{\frac{1}{3}}+4(1-i\sqrt{3})}{6(19+3\sqrt{33})^{\frac{1}{3}}}$\vspace{.3cm}.
	\\
	And $n^{th}$ term of tribonacci is given by
	\begin{eqnarray}
		F_{n}=C_{1}(\alpha_{1})^{n} + C_{2}(\alpha_{2})^{n} + C_{3}(\alpha_{3})^{n}, \hspace{.4cm} n\in\mathbb{Z}
	\end{eqnarray}
	where, \\
	$C_{1}= {\frac {\frac{1}{2} \left( \sqrt [3]{19+3\, \sqrt{3} \sqrt{11}}+2 \right)  \left( 19+3\, \sqrt{3} \sqrt{11} \right) \\
				\mbox{}}{ \sqrt{33}\left( 19+3\, \sqrt{33}\right)^{\frac{2}{3}}+2\, \sqrt{11}\sqrt [3]{19+3\, \sqrt{33}} \sqrt{3}\\
				\mbox{}+23\, \sqrt{33} + 9\, \left( 19+3\, \sqrt{33} \right) ^{\frac{2}{3}}+18\,\sqrt[3]{19+3\,\sqrt{33}}\\\mbox{}+135}}$\vspace{.2cm}\\
	$C_{2} = {\frac {-i/12 \left( -9\,i \sqrt{11}+4\,i\sqrt [3]{19+3\, \sqrt{33}} \sqrt{3}\\
			\mbox{}-9\, \sqrt{33}-19\,i \sqrt{3}-12\,\sqrt [3]{19+3\, \sqrt{33}}\\
			\mbox{}-57 \right)  \sqrt{3}\sqrt [3]{19+3\, \sqrt{33}}}{99+19\, \sqrt{3} \sqrt{11}\\
			\mbox{}}}$\vspace{.2cm}\\
	$C_{3} = \frac {-i/12 \sqrt{3}\sqrt [3]{19+3\, \sqrt{33}}\\
		\mbox{} \left( -9\,i \sqrt{11}+4\,i\sqrt [3]{19+3\, \sqrt{33}} \sqrt{3}\\
		\mbox{}+9\, \sqrt{33}-19\,i \sqrt{3}+12\,\sqrt [3]{19+3\,\sqrt{33}}\\
		\mbox{}+57 \right) }{99+19\, \sqrt{33}}\vspace{.2cm}$
	\begin{remark}
		For higher order Fibonacci sequence, we can construct corresponding difference equation but solution of higher order difference equation(greater than 3) is still complicated.
	\end{remark}
%%%%%%%%%%%%%%%%%%%%%%%%%%%%%%%%%%%%%%%%%%%%%%%%%%%%%%%%%%%%%%%%%%%%%%%%%%%%%%%%%
	\subsection{Key Exchange Scheme}
	Suppose we have public key $pk(p, E_{1},E_{2})$, where component of public key $E_{1} ~and ~E_{2}$ is created by Bob(receiver) with the help of their private key $D$. Now, using this public key, the secret key $\lambda$ will be calculated(See, \ref{ElGamal}). After generating secret key $\lambda$, key matrix $K$ will be given as 
	%%%%%%%%%%%%%%%%%%%%%%%%%%%%%%%%%%%%%%%%%%%%%%%%%%%%%%%%%%%%%%%%%%%%%%%%%%%%%%%%%
	\subsubsection{Algorithm}
	\textbf{Encryption Algorithm:}
	\begin{enumerate}
		\item Alice chooses secret number $e$, such that $1 < e < \phi(p)$
		\item \textbf{Signature:} $k \leftarrow E_{1}^{e}\pmod{p}$
		\item \textbf{Secret key:} $\lambda \leftarrow E_{2}^{e}\pmod{p}$
		\item \textbf{Key Matrix:} $K \leftarrow Q_{\lambda}^{k}$, where $Q_{\lambda}^{k}$ is multinacci matrix of order $\lambda\times \lambda$ (See, \ref{GenFeb}).
		\item  \textbf{Encryption:} $Enc(P): $\hspace{.76cm} $C_{i}\leftarrow (P_{i}K + B) \pmod{p}$
		\item  transmit $(k, C)$ to Bob
	\end{enumerate}
    \textbf{Decryption Algorithm:} Bob, after receiving $(k, C)$
	\begin{enumerate}
		\item \textbf{Secret key:} $\lambda\leftarrow k^{D}\pmod{p}$(see, \ref{recoverkey}), where $D$ is Bob's secret key. 
		\item \textbf{Key Matrix:} $K \leftarrow Q_{\lambda}^{k}$.
		\item \textbf{Decryption:} $Dec(C) :$ \hspace{1cm} $P_{i} \leftarrow(C_{i} - B)K^{-1}\pmod{p}$
	\end{enumerate}
	%%%%%%%%%%%%%%%%%%%%%%%%%%%%%%%%%%%%%%%%%%%%%%%%%%%%%%%%%%%%%%%%%%%%%%%%%%%%%%%%%
	\subsection{Numerical Example}
	 Suppose Alice want to send message to Bob, then first she calculate key matrix K using above proposed technique and then encrypt plaintext P with key matrix K.
	%	 \\Here, we use p=37, i.e the 37-alphabet letter in which A-Z have numerical equivalent 00-25, 0-9 have numerical equivalent 26-35 and 36 for blank/space.
	%	\\\textbf{Remarks:} Later on, we will strengthen the security with prime 129 or bigger than it by considering all special character.
	\begin{example}[Public key calculation]\label{eg1}
		Assume that $p=37$ and Bob's private key is $D=13$. Further Bob choose primitive root of $p$, say $\alpha=5$.
	\end{example}
	 Bob assign $E_{1}=\alpha=5$ and compute $E_{2}=E_{1}^{D}\pmod{p} = 5^{13}\pmod{37} = 13$
	 \\Thus, Bob's public key $ pk(p, E_{1},E_{2})$ is $pk(37,5,13) $ and secret key is $D=13 $.
	\begin{example}[Encryption-Decryption]
	Suppose plaintext be \textbf{HELLO2019}, public key is $ pk(37,5,13) $ and shifting vector B is [31,13,19].
	\end{example}
	\begin{proof}[Solution]
	Here plaintext $P=\textbf{HELLO2019}$. So for encryption\\
	First, choose $ e $ such that $1 < e < \phi(p)$, let $ e=22 $.\\
	Calculating signature, $k=E_{1}^{e} = 5^{22}\pmod{37}\equiv 4$.
	\\and secret key $\lambda = E_{2}^{e} = 13^{22}\pmod{37} \equiv 3$.
	\\Now, constructing the key matrix K using above data according to generalized Fibonacci matrix\vspace{.2cm} $Q_{\lambda}$(\ref{GenFeb}), which is given by \\
	$K = Q_{\lambda}^{k} = Q_{3}^{4} = $
	$\begin{bmatrix}\label{Q2}
		f_{6} & f_{5}+f_{4} & f_{5} \\			
		f_{5} & f_{4}+f_{3} & f_{4} \\		
		f_{4} & f_{3}+f_{2} & f_{3} \\			
	\end{bmatrix}
	= \begin{bmatrix}
		7 & 6 & 4 \\	
		4 & 3 & 2 \\	
		2 & 2 & 1 \\	
	\end{bmatrix}$
	and $K\pmod{37} = 
	\begin{bmatrix}
		7 & 6 & 4 \\	
		4 & 3 & 2 \\	
		2 & 2 & 1 \\
	\end{bmatrix}$
	\\\\where generalized Fibonacci sequence for $\lambda = 3$(Tribonacci) is given as\vspace{.3cm}
	\\\resizebox{\linewidth}{!}
	{
	\begin{tabular}{|c|c|c|c|c|c|c|c|c|c|c|c|c|c|c|c|c|c|c|c|c|c|c|c|c|c|c|}
			\hline 
		Index &... & -8 & -7 & -6 & -5 & -4 & -3 & -2 & -1 & 0 & 1 & 2 & 3 & 4 & 5 & 6 & 7 & 8 & ... \\ 
			\hline 
		Tribonacci Seq. & ...& -8 & 4 & 1 & -3 & 2 & 0 & -1 & 1 & \textbf{0} & \textbf{0} & \textbf{1} & 1 & 2 & 4 & 7 & 13 & 24 & ... \\ 
			\hline
	\end{tabular}}\vspace{.3cm}
	\\Now, consider the plaintext \textbf{P = HELLO2019}. The plain text is divided into blocks as follows:
	\\$P_{1}=[~H~E ~L]=[07~~04~~11], P_{2}=[L ~O ~2]=[11~14~28]~~\text{and}~~P_{3}=[0~1~9]= [26~27~35]$.
	\\\textbf{Encryption:}
	$C \leftarrow  (PK + B)\pmod{37}$.
	\\ $C_{1} = (P_{1}K+B)\equiv$ 
	$\left(\begin{bmatrix}
		07 & 04 & 11 \\
	\end{bmatrix}	
	\\\begin{bmatrix}
		7 & 6 & 4 \\	
		4 & 3 & 2 \\	
		2 & 2 & 1 \\	
	\end{bmatrix}+
	\begin{bmatrix}
		31 & 13 & 19 \\
	\end{bmatrix}\right)\pmod{37}$  \\\indent$\equiv (07~~15~~29) \sim$ (H P 3)
	\\ $C_{2} = (P_{2}K+B)\equiv$ 
	$\left(\begin{bmatrix}
		11 & 14 & 28 \\
	\end{bmatrix}	
	\\\begin{bmatrix}
		7 & 6 & 4 \\	
		4 & 3 & 2 \\	
		2 & 2 & 1 \\	
	\end{bmatrix}+
	\begin{bmatrix}
		31 & 13 & 19 \\
	\end{bmatrix}\right)\pmod{37}$		\\\indent$\equiv (35~~29~~08) \sim$ (9 3 I)
	\\ $C_{3} = (P_{3}K+B)\equiv$ 
	$\left(\begin{bmatrix}
		26 & 27 & 35 \\
	\end{bmatrix}	
	\\\begin{bmatrix}
		7 & 6 & 4 \\	
		4 & 3 & 2 \\	
		2 & 2 & 1 \\	
	\end{bmatrix}+
	\begin{bmatrix}
		31 & 13 & 19 \\
	\end{bmatrix}\right)\pmod{37}$		\\\indent$\equiv (21~~24~~27) \sim$ (V Y 1)
	\\which gives cipher-text $C=(C_{1}C_{2}C_{3})=(HP393IVY1)$
	\\i.e $P : HELLO2019 \rightarrow C : HP393IVY1$
	\\Now, Alice send this cipher-text to Bob along with her signature.
	\\\textbf{Decryption:} After receiving ciphertext $C$ along with signature $(k,B)$, Bob will calculate decryption key $K^{*}$ with the help of their secret key $D$, which is given as
	\begin{eqnarray}
		\lambda = k^{D}\pmod{37}= 4^{13}\pmod{37} \equiv 3 \nonumber
	\end{eqnarray}
	thus
	\\$K^{*} = Q_{\lambda}^{-k}= Q_{3}^{-4} = 
	\begin{bmatrix}
		f_{-2} & f_{-3}+f_{-4} & f_{-3} \\			
		f_{-3} & f_{-4}+f_{-5} & f_{-4} \\		
		f_{-4} & f_{-5}+f_{-6} & f_{-5} \\		
	\end{bmatrix} =
	\begin{bmatrix}
		-1 & 2 & 0 \\		
		0 & -1 & 2 \\		
		2 & -2 & -3 \\		
	\end{bmatrix}$, 
	\\And, $K^{*}\pmod{37} =
	\begin{bmatrix}
		36 & 2 & 0 \\		
		0 & 36 & 2 \\		
		2 & 35 & 34 \\		
	\end{bmatrix}$ \vspace{.2cm}
	\\Clearly, $ K.K^{*} = I\pmod{37}$
	\\Hence, decryption takes place as $P_{i}\leftarrow (C_{i} - B).K^{*}\pmod{37}$.
	\\$P_{1} = (C_{1}-B)K^{*} \equiv$ 
	$\left(\begin{bmatrix}
		07 & 15 & 29 \\
	\end{bmatrix}	
	-
	\begin{bmatrix}
		31 & 13 & 19 \\
	\end{bmatrix}\right)
	\begin{bmatrix}
		36 & 2 & 0 \\		
		0 & 36 & 2 \\		
		2 & 35 & 34 \\		
	\end{bmatrix}\pmod{37} \\ \indent \equiv (07~~04~~11) \sim$ (H E L)
	\\
	$P_{2} = (C_{2}-B)K^{*} \equiv$ 
	$\left(\begin{bmatrix}
		31 & 25 & 06 \\
	\end{bmatrix}	
	-
	\begin{bmatrix}
		31 & 13 & 19 \\
	\end{bmatrix}\right)
	\begin{bmatrix}
		36 & 2 & 0 \\		
		0 & 36 & 2 \\		
		2 & 35 & 34 \\	
	\end{bmatrix}\pmod{37} \\ \indent \equiv (11~~14~~28) \sim$ (L O 2)
	\\$P_{3} = (C_{3}-B)K^{*} \equiv$ 
	$\left(\begin{bmatrix}
		21 & 24 & 27 \\
	\end{bmatrix}	
	-
	\begin{bmatrix}
		31 & 13 & 19 \\
	\end{bmatrix}\right)
	\begin{bmatrix}
		36 & 2 & 0 \\		
		0 & 36 & 2 \\		
		2 & 35 & 34 \\	
	\end{bmatrix}\pmod{37} \\ \indent \equiv (26~~27~~35) \sim$ (0 1 9)
	\\Thus, Bob recovered the plaintext \textbf{HELLO2019} sent by Alice successfully.
	\end{proof}
%%%%%%%%%%%%%%%%%%%%%%%%%%%%%%%%%%%%%%%%%%%%%%%%%%%%%%%%%%%%%%%%%%%%%%%%%%%%%%%%%%%%%
	\section{Complexity analysis and strength}
	\subsection{Security Strength and Analysis} %\cite{stallings2017cryptography,koukouvinos2009encryption,aumasson2017serious}
	\begin{definition}[Brute force attack:]
	A brute force attack is a technique of breaking a cryptographic scheme by trying a large number of possibilities(i.e all possible combination). In most of cases, a brute force attack typically means, testing all possible keys in order to recover the information(plaintext) used to produce a particular ciphertext.
	\end{definition}
	One way for an adversary(say Oscar) to break our proposed technique using brute force attack, is to generate all possible matrices. Since, we are working on $F_{p}(p>26$ be a prime), so Oscar need to check $p^{\lambda^{2}}$ matrices.
	\begin{example}
		Let $p=37$ and $\lambda= 50$, then Oscar need to check $p^{\lambda^{2}} = 37^{50^{2}} = 37^{2500}$ matrices, which is equivalent to $3.1938180242\times 10^{3920}$ and too large.
	\end{example}
	As we know that $GL_{\lambda}$ denotes the General Linear group\cite{dummit2004abstract,grillet2007abstract}, which consists of all invertible matrices of order $\lambda\times \lambda$ over $F_{p}$ and its order is given by
	\begin{equation}\label{GLN}
		|GL_{\lambda}(F_{p})| = (p^{\lambda}-p^{\lambda-1})(p^{\lambda}-p^{\lambda-2}) \cdots (p^{\lambda}-p^{1})(p^{\lambda}-1)
	\end{equation}
	\begin{example}
		Consider key matrix $K=Q_{50}^{19}$ ~over $F_{37}$. Here, we have $\lambda=50$ and $p=37$
	\end{example}
	So, total number of invertible matrices of order $50\times 50$ over $F_{37}$ is,
	\begin{eqnarray}
	|GL_{50}(F_{37})| &=& (37^{50}-37^{49})(37^{50}-37^{48}) \cdots (37^{50}-37)(37^{50}-1)\\
	&=& 3.105165707300569\times 10^{3920}\nonumber
	\end{eqnarray}
	Thus in this case, for Oscar to recover this key matrix need to check $10^{3920}$ possible matrices, which is almost difficult.
	Following list shows the strength of system with proposed key matrix $Q_{\lambda}^k$ over $F_{37}$.
	\vspace{.41cm}
	\\\begin{center}
		\begin{tabular}{|c|c|c|c|}
		\hline 
		$\lambda$ & k & Possible no. of matrix to check over $F_{p=37}$ by Oscar\\
		&  & $ |GL_{\lambda}(F_{p})|= (p^{\lambda}-p^{\lambda-1})(p^{\lambda}-p^{\lambda-2}) \cdots(p^{\lambda}-1)$\\  
		\hline
		1 & 1 & $Q_{1}^1 \to |GL_{2}(F_{37})|=36$\\ 
		& 2 & $Q_{1}^2 \to |GL_{2}(F_{37})|=36$\\
		& \vdots & $\vdots$\\
		& k & $Q_{1}^k \to |GL_{2}(F_{37})|=36$\\
		\hline
		 2 & 1 & $Q_{2}^1 \to |GL_{2}(F_{37})|=1.82218\times10^6$\\ 
		 & 2 & $Q_{2}^2 \to |GL_{2}(F_{37})|=1.82218\times10^6$\\
		 & \vdots & $\vdots$\\
		 & k & $Q_{2}^k \to |GL_{2}(F_{37})|=1.82218\times10^6$\\
		 %%%%%%%%%%%%%%%%%%%%%%%%%%%%%%%%%%%%%%%%%
		\hline
		3 & 1 & $Q_{3}^1 \to |GL_{3}(F_{37})|= 1.26354\times10^{14}$\\ 
		& 2 & $Q_{3}^2 \to |GL_{3}(F_{37})|= 1.26354\times10^{14}$\\
		& \vdots & $\vdots$\\
		& k & $Q_{3}^k \to |GL_{3}(F_{37})|= 1.26354\times10^{14}$\\
		%%%%%%%%%%%%%%%%%%%%%%%%%%%%%%%%%%%%%%%%%%
		\hline
		4 & 1 & $Q_{4}^1 \to |GL_{4}(F_{37})|= 1.19951\times 10^{25}$\\ 
		& 2 & $Q_{4}^2 \to |GL_{4}(F_{37})|= 1.19951\times 10^{25}$\\
		& \vdots & $\vdots$\\
		& k & $Q_{4}^k \to |GL_{4}(F_{37})|= 1.19951\times 10^{25}$\\
		%%%%%%%%%%%%%%%%%%%%%%%%%%%%%%%%%%%%%%%%%%
		\hline		
		\vdots & \vdots & \vdots\\ 
		\hline
		$\lambda$ & k & $Q_{\lambda}^k \to |GL_{\lambda}(F_{37})|= (37^{\lambda}-37^{\lambda-1})\cdots (37^{\lambda}-1)$\\ 
		\hline
		%%%%%%%%%%%%%%%%%%%%%%%%%%%%%%%%%%%%%%%%%%
	\end{tabular}
	\end{center} 
	From above table, it is clear that by making prime $p$ too large, it is almost unbreakable. Also, we observed that it does not depends on $k$ and if we increase the size of key matrix for fixed $p$, then $|GL_{\lambda}(F_{p})| \to \infty$. 
	Thus, if the key space is large, then breaking of system by Brute-force attack\cite{stallings2017cryptography} is impractical. In that case, the opponent may try analysis of ciphertext by various statistical test on it.
%%%%%%%%%%%%%%%%%%%%%%%%%%%%%%%%%%%%%%%%%%%%%%%%%%%%%%%%%%%%%%%%%%%%%%%%%%%%%%%%%%%%%
	\subsection{Complexity Analysis}
	A encryption scheme's strength is determined by the computational power needed to
	break it. The computational complexity of an algorithm is measured by two factors; time complexity(T) and space complexity(S).	Both T and S are commonly expressed as functions of $n$, where $n$ is the size of the input.
	The time complexity\cite{stothers2010complexity} of an algorithm describes the amount of time it takes to run.
	 In general, the computational complexity of an algorithm is expressed by “big $\mathcal{O}$” notation.\\
	 \textbf{$\mathcal{O}$-notation :} For a given function $g(n)$, $\mathcal{O}(g(n))$ represents the set of functions, 
	 $\mathcal{O}(g(n)) = \{f(n) : \exists$ positive constants $a$ and $n_{0}$ such that $0 \leq f(n) \leq a.g(n)$ for all n,  $n \geq n_{0}$\}
	 
	It is well known that for matrix multiplication, complexity in worst case is $\mathcal{O}(n^{3})$, which increases the time complexity of matrix multiplications of large sizes. But, here in case of generalized Fibonacci matrices this time complexity reduces to $\mathcal{O}(n)$ (See,\ref{GenFeb}).
%%%%%%%%%%%%%%%%%%%%%%%%%%%%%%%%%%%%%%%%%%%%%%%%%%%%%%%%%%%%%%%%%%%%%%%%%%%%%%%%%%%%%
	\subsection{Conclusion}
	As we know that in a Ring, all non-zero elements does not have necessarily multiplicative inverse. So for encryption key matrix, we use to take entries from field.
	
	But here, in case of multinacci matrices, we do not need to worry about field, it works fine with ring $Z_{n}, n\in \mathbb{N}$. Because as we proved above, in case of multinacci matrices $Q_{\lambda}^{k}$, we have always a matrix $Q_{\lambda}^{-k}$ such that $Q_{\lambda}^{k}Q_{\lambda}^{-k}=I$. Thus multinacci matrices as key matrix for encryption system, play a important role, which increases the key space and complexity of breaking.
	
	In this paper, we have worked on generalization of Fibonacci matrices(corresponding to Fibonacci sequences) and their inverse with direct calculation of $n^{th}$ multinacci term. Also developed a public key cryptography using Affine-Hill cipher with extended Fibonacci(multinacci matrix) matrix $Q_{\lambda}^{k}$ under prime modulo, where $Q_{\lambda}^{k}$ is matrix of order ${\lambda}\times\lambda$ and its elements are constructed using terms of multinacci sequences.
	
	Our proposed method strengthen the security of the system which has three digital signatures namely $\lambda$, $ k $ and $ B $. Since, $ K $(Corresponding to $k$, $K= Q_{\lambda}^{k}$) and B are known only to Alice and Bob, So it is not possible to break this system by anyone. Our proposed method is mathematically simple and having large key space, as for construction of key for known party is easy but for intruder it is very difficult to construct a matrix with the help of tuple $(\lambda, k)$. This is the main beauty of our proposed schemes for key establishment. This proposed system take care of authenticity and integrity of data as key k and $\lambda$ is known only to Bob and Alice.
	%%%%%%%%%%%%%%%%%%%%%%%%%%%%%%%%%%%%%%%%%%%%%%%%%%%%%%%%%%%%%%%%%%%%%%%%%%%%%%%%%%%%
	\subsection*{Acknowledgment}
	First author acknowledge the University Grant Commission, India for providing fellowship for this research work and also thank to Central University of Jharkhand, India for their kind support.

\bibliography{refhillcipher}
\bibliographystyle{acm}
\end{document}